%% ****** Start of file template.aps ****** %
%%
%%
%%   This file is part of the APS files in the REVTeX 4 distribution.
%%   Version 4.0 of REVTeX, August 2001
%%
%%
%%   Copyright (c) 2001 The American Physical Society.
%%
%%   See the REVTeX 4 README file for restrictions and more information.

\documentclass[aps,pra,twocolumn, showkeys]{revtex4-1}
\usepackage{amsmath,paralist,amsthm,comment,amssymb}
\usepackage{graphicx}

\usepackage{tikz}
\usetikzlibrary{matrix,arrows}
\usetikzlibrary{shapes.geometric}
\usetikzlibrary{patterns} % LATEX and plain TEX when using TikZ

\usepackage{pgfplots}

\usepackage{algorithm} 
\usepackage{algorithmic}

\newcommand{\nix}[1]{}

\newtheorem{theorem}{Theorem}

\newtheorem{lemma}[theorem]{Lemma}

\newcommand{\ba}{\begin{array}}
\newcommand{\ea}{\end{array}}
\newcommand{\ben}{\begin{eqnarray}}
\newcommand{\een}{\end{eqnarray}}

\newcommand{\be}{\begin{eqnarray*}}
\newcommand{\ee}{\end{eqnarray*}}

\begin{document}

%Title of paper
\title{Efficient Decoding of Topological Color Codes}

\author{Pradeep Sarvepalli}
%\email[]{pradeep@phas.ubc.ca}
\author{Robert Raussendorf}
%\email[]{raussen@phas.ubc.ca}
\affiliation{Department of Physics and Astronomy, University of British Columbia, Vancouver V6T 1Z1, Canada }

% Copyright by Pradeep Sarvepalli and Robert Raussendorf
% All rights reserved. 

\date{November 3, 2011}

\begin{abstract}
Color codes are a class of topological quantum codes with a high error threshold and large set of transversal encoded gates, and are thus suitable for fault tolerant quantum computation in two-dimensional architectures. Recently, computationally efficient decoders for the color codes were proposed. We describe an alternate efficient iterative decoder for topological color codes, and apply it to the color code on hexagonal lattice embedded on a torus. In numerical simulations, we find an error threshold of 7.8\% for independent dephasing and spin flip errors.   

\end{abstract}

% insert suggested PACS numbers in braces on next line
\pacs{}
% insert suggested keywords - APS authors don't need to do this
\keywords{ decoding, quantum codes, color codes, iterative decoding, topological codes }

%\maketitle must follow title, authors, abstract, \pacs, and \keywords
\maketitle
\section{Introduction}

Topological quantum codes are well suited for fault-tolerant quantum computation \cite{FT1, FT2, FT3} in two-dimensional qubit arrays constrained by short-range interaction. One of their topological features is that the number of encoded qubits depends only on the topology of the embedding surface but not on the code size. For such codes, the observables measured in order to identify errors are very close to local. They can be read out efficiently using local and nearest-neighbor quantum gates. 

The first family of topological quantum codes proposed were Kitaev's surface codes \cite{Kit1}. They combine many advantages, namely a high error threshold \cite{dennis02}, very low weight of stabilizer operators, an efficient decoding algorithm \cite{dennis02,Edmonds}, and the flexibility to implement encoded quantum gates in a topological fashion by code deformations \cite{RH, BombinX, BombinTwist, FowlerDef}. A reason why one might want to go beyond surface codes is to reduce the operational cost of fault-tolerant quantum computation in two-dimensional qubit arrays. For surface codes, only a small set of gates can be realized transversally or by code deformation. All other gates invoke the procedure of magic state distillation \cite{BK2} which, although it does not dominate the overhead scaling, increases the operational cost.

A further family of topological quantum codes are color codes \cite{bombin06, bombin07}. For them, a larger group of quantum gates can be implemented in a transversal or topological fashion, potentially reducing the operational overhead. For the color codes, the error-threshold turns out to be as high as for the surface code if  the syndrome measurement is assumed perfect \cite{KG1}. If error in the syndrome measurement is taken into account, surprisingly,  color codes perform better than the surface codes \cite{KG2}. However, for a full analysis in which the error-correction procedure is broken down into gates, each with its own error, the error-threshold for the color codes seems to have the smaller value \cite{Land}. 

Error-correction for the color codes can be cast as a matching problem on a hypergraph \cite{wang10}, for which---unlike the matching problem on graphs \cite{Edmonds} that applies to the surface codes \cite{dennis02}---no polynomial time algorithm is known. Still, approximate polynomial time solutions can be found \cite{wang10} that lead to a good error threshold. A different approach is taken in \cite{bombin11} where the color code on a square-octagon lattice is mapped onto two copies of a surface code, and the surface codes are then decoded in a recursive fashion \cite{poulin10} somewhat analogous to concatenated codes. Message passing is an integral part of the algorithm in \cite{poulin10}.

In this paper, we present an alternate decoding algorithm for color codes, and apply it to the hexagonal lattice for which no threshold value has yet been reported. We find an error threshold of 7.8\% for an error model with independent dephasing and spin flip errors. Like the algorithm in \cite{poulin10}, the present algorithm is recursive and employs message passing, but it decodes the color code directly without mapping to two copies of a surface code.

\section{Background}

\subsection{Topological color codes}
There are two main ingredients for constructing a color code: 
i) a trivalent graph
ii) an embedding of the graph on a surface so that its faces are  3-colorable. The stabilizer generators of the code are defined as 
\begin{eqnarray}
B_f^\sigma = \prod_{i\in f} \sigma_f, \label{eq:stabGenTCC}
\end{eqnarray}
where $f$ is a face of the embedding.

\begin{center}
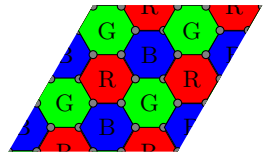
\begin{figure}[htb]
\begin{tikzpicture}[scale=.75]
\clip (-0.25,-2.15)--(1.25,0.45)--(4.25,0.45) --(2.75,-2.15);
\foreach \a in {-1,...,5}{
\node[regular polygon, regular polygon sides=6, minimum size=2, pattern=dots,draw,fill=red] at (\a*1.5+0.75,0.43) {R}; 
\node[regular polygon, regular polygon sides=6, minimum size=2, pattern=crosshatch dots, draw,fill=green] at (\a*1.5,0) {G}; 
\node[regular polygon, regular polygon sides=6, minimum size=2,draw,fill=blue] at (\a*1.5+0.75,-0.43) {B}; 
\node[regular polygon, regular polygon sides=6, minimum size=2, pattern =dots, draw,fill=red] at (\a*1.5,-0.86) {R};
\node[regular polygon, regular polygon sides=6, minimum size=2, pattern=crosshatch dots,  draw,fill=green] at (\a*1.5+0.75,-1.29) {G}; 
\node[regular polygon, regular polygon sides=6, minimum size=2, draw,fill=blue] at (\a*1.5,-1.72) {B}; 
\node[regular polygon, regular polygon sides=6, minimum size=2, pattern =dots, draw,fill=red] at (\a*1.5+0.75,-2.15) {R};
\node[regular polygon, regular polygon sides=6, minimum size=2, pattern=crosshatch dots, draw,fill= green] at (\a*1.5,-2.58) {G}; 

\draw (\a*1.5-0.25,0.43) [fill=gray] circle (2pt);
\draw (\a*1.5+0.25,0.43) [fill=gray] circle (2pt);

\draw (\a*1.5+0.5,0) [fill=gray] circle (2pt);
\draw (\a*1.5+1,0) [fill=gray] circle (2pt);

\draw (\a*1.5-0.25,-0.43) [fill=gray] circle (2pt);
\draw (\a*1.5+0.25,-0.43) [fill=gray] circle (2pt);

\draw (\a*1.5+0.5,-0.86) [fill=gray] circle (2pt);
\draw (\a*1.5+1,-0.86) [fill=gray] circle (2pt);

\draw (\a*1.5-0.25,-1.29) [fill=gray] circle (2pt);
\draw (\a*1.5+0.25,-1.29) [fill=gray] circle (2pt);

\draw (\a*1.5+0.5,-1.72) [fill=gray] circle (2pt);
\draw (\a*1.5+1,-1.72) [fill=gray] circle (2pt);

\draw (\a*1.5-0.25,-2.15) [fill=gray] circle (2pt);
\draw (\a*1.5+0.25,-2.15) [fill=gray] circle (2pt);

}
\end{tikzpicture}
\caption{Topological color code on a torus}\label{fig:tccOntorus}
\end{figure}
\end{center}

In this paper we consider the color code on a hexagonal lattice
embedded on a torus.  The same ideas can be applied to other color codes as well. We choose the embedding shown in Fig.~\ref{fig:tccOntorus}. 
The resulting code has the parameters $[[n,4]]$, where $n$ is the number of the vertices in the 
graph. The logical operators are associated with the nontrivial cycles on the torus, and carry a color label. The torus supports four encoded qubits. Our choice of hexagonal lattice and its embedding on the torus gives a code with the parameters $[[18\cdot 4^m, 4, 2^{m+2}]]$, where $m$ can be any nonnegative integer.

\subsection{The dual lattice}
Although the algorithm can be designed for topological color codes on an arbitrary graph, the codes most likely to be used (especially in a fault tolerant setting) are those embedded in a regular, translation-invariant structure, i.e., a lattice . Three suitable lattices are known, namely the hexagonal lattice, square-octagon lattice (also called the truncated square tiling) and
the truncated trihexagonal lattice \cite{anderson11}. We restrict our attention to codes on regular lattices, in particular to the hexagonal lattice. The color code on this lattice 
on a torus has to satisfy certain restrictions on its length in order to be embedded
on the torus. Assuming that both the logical operators color code have the same weight,
we have $n=2\cdot(3 x^m)^2$, where $m$ is a nonnegative integer and $x\geq 2$.
For the purposes of decoding it is helpful to 
represent the color code on the dual lattice. 

\begin{center}
\begin{figure}[htb]
\begin{tikzpicture}[scale=.75] 
\draw[ thick,color=black,step=1cm] (0,0) grid (6,6); 
\foreach \x in {1,...,6}{
	\foreach \y in {1,...,6}{
		\draw[ thick,color=black] (\x,\y-1)--(\x-1,\y);
	}
}

\foreach \x in {0,...,1}{
	\draw[pattern=dots] (\x*3,0)--(3*\x+1,0)--(3*\x+1,1)--(3*\x,1)--(3*\x,0);
	\draw[pattern=dots] (3*\x+1,1)--(3*\x+2,1)--(3*\x+3,0)--(3*\x+2,0)--(3*\x+1,1);
}

\foreach \x in {0,...,6}{
		\draw (\x,\x)  [fill=red] circle (3pt);
}
\foreach \x in {0,...,5}{
		\draw (\x+1,\x)  [fill=blue] circle (3pt);
		\draw (\x,\x+1)  [fill=green] circle (3pt);
}

\foreach \x in {0,...,4}{
		\draw (\x+2,\x)  [fill=green] circle (3pt);
		\draw (\x,\x+2)  [fill=blue] circle (3pt);
}
\foreach \x in {0,...,3}{
		\draw (\x+3,\x)  [fill=red] circle (3pt);
		\draw (\x,\x+3)  [fill=red] circle (3pt);
}
\foreach \x in {0,...,2}{
		\draw (\x+4,\x)  [fill=blue] circle (3pt);
		\draw (\x,\x+4)  [fill=green] circle (3pt);
}
\foreach \x in {0,...,1}{
		\draw (\x+5,\x)  [fill=green] circle (3pt);
		\draw (\x,\x+5)  [fill=blue] circle (3pt);
}
\draw (6,0) [fill=red] circle(3pt);
\draw (0,6) [fill=red] circle(3pt);
\end{tikzpicture}
\caption{Hexagonal color code on a dual lattice (opposite sides of the lattice are identified). In this picture each vertex corresponds to a check and 
each (triangular) face  corresponds to a qubit. Further, logical errors appear as nontrivial cycles of faces. 
One such operator is highlighted by the shaded faces.}\label{fig:tcc72}
\end{figure}
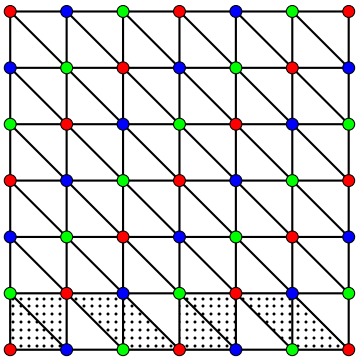
\end{center}

%\section{An Iterative Decoding Algorithm for Color Codes}

\section{The decoding algorithm---hard decision version}

\subsection{Rescaling} 
The main technique at work in the present algorithm is a rescaling transformation in which the code  is mapped into a smaller copy of itself.

The first step in the decoding the color code is to divide the code into
smaller code blocks each of which can be decoded using an efficient decoder.
For instance, the $[[72,4,8]]$ code shown in Fig.~\ref{fig:tcc72} can be divided as follows. 

\begin{center}
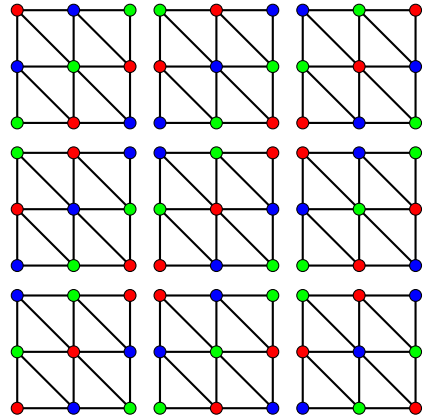
\begin{figure}[htb]
\begin{tikzpicture}[scale=.75] 

\newcommand\rCell{
\draw[ thick,color=black,step=1cm] (0,0) grid (2,2); 
\draw [thick, color=black] (1,0)--(0,1);
\draw [thick, color=black] (1,2)--(2,1);
\draw [thick, color=black] (0,2)--(2,0);
\foreach \x in {0,...,2}{
		\draw (\x,\x)  [fill=red] circle (3pt);
}
\foreach \x in {0,...,1}{
		\draw (\x+1,\x)  [fill=blue] circle (3pt);
		\draw (\x,\x+1)  [fill=green] circle (3pt);
}
\draw (2,0) [fill=green] circle( 3pt);
\draw (0,2) [fill=blue] circle( 3pt);
}

\newcommand\gCell{
\draw[ thick,color=black,step=1cm] (0,0) grid (2,2); 
\draw [thick, color=black] (1,0)--(0,1);
\draw [thick, color=black] (1,2)--(2,1);
\draw [thick, color=black] (0,2)--(2,0);
\foreach \x in {0,...,2}{
		\draw (\x,\x)  [fill=green] circle (3pt);
}
\foreach \x in {0,...,1}{
		\draw (\x+1,\x)  [fill=red] circle (3pt);
		\draw (\x,\x+1)  [fill=blue] circle (3pt);
}
\draw (2,0) [fill=blue] circle( 3pt);
\draw (0,2) [fill=red] circle( 3pt);
}

\newcommand\bCell{
\draw[ thick,color=black,step=1cm] (0,0) grid (2,2); 
\draw [thick, color=black] (1,0)--(0,1);
\draw [thick, color=black] (1,2)--(2,1);
\draw [thick, color=black] (0,2)--(2,0);
\foreach \x in {0,...,2}{
		\draw (\x,\x)  [fill=blue] circle (3pt);
}
\foreach \x in {0,...,1}{
		\draw (\x+1,\x)  [fill=green] circle (3pt);
		\draw (\x,\x+1)  [fill=red] circle (3pt);
}
\draw (2,0) [fill=red] circle( 3pt);
\draw (0,2) [fill=green] circle( 3pt);
}

\foreach \x in {-1,...,1}{	
\begin{scope}[xshift=(2*\x+0.4*\x)*30,yshift=(2*\x+0.4*\x)*30]
   \rCell
\end{scope}
}

\foreach \x in {0,...,1}{	
\begin{scope}[xshift=(2*\x+0.4*\x)*30,yshift=(2*\x-2.4+0.4*\x)*30]
   \gCell
\end{scope}
}

\foreach \x in {0,...,1}{	
\begin{scope}[xshift=(2*\x-2.4+0.4*\x)*30,yshift=(2*\x+0.4*\x)*30]
   \bCell
\end{scope}
}

\foreach \x in {0,...,0}{	
\begin{scope}[xshift=(2*\x+2.4+0.4*\x)*30,yshift=(2*\x-2.4+0.4*\x)*30]
   \bCell
\end{scope}
}

\foreach \x in {0,...,0}{	
\begin{scope}[xshift=(2*\x-2.4+0.4*\x)*30,yshift=(2*\x+2.4+0.4*\x)*30]
   \gCell
\end{scope}
}
\end{tikzpicture}
\caption{Dividing the lattice\label{latticeA}}
\end{figure}
\end{center}
Each block can be decoded separately and then replaced by a fewer number of logical qubits 
to give rise to another color code of smaller size, which can then be decoded using the same process.
The checks in each block can be distinguished into three classes: i) those that are interior to the block 
ii)  those that are on the boundary and shared between two blocks and iii) those on the corner and 
shared between more than one block. 

In decoding the block individually, we use the checks interior to the block and those on the boundary
but not those on the corners. These are used in the next level of decoding. The errors made in the 
lower levels of decoding are corrected in subsequent levels using these remaining checks. 

Recall that the qubit corresponds to a triangle. Thus, when we replace a set of qubits by 
another logical qubit, we expect it to emulate a qubit in the following sense: A single error on a 
qubit flips all the three checks on the corners of the qubit. Similarly, the rescaled qubit must be 
identified with a triangle whose corner checks can be flipped by a logical operator. 
The smallest such cell is as shown in Fig~\ref{fig:rescaling}.
\begin{center}
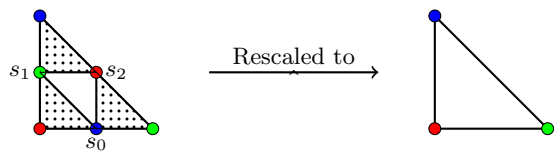
\begin{figure}[htb]
\begin{tikzpicture}[scale=.75,domain=0:2] 
%\draw[thin,color=gray!10!white,step=.5cm] (0,0) grid (4.4,3.4); 
\draw[thick,color=black] (0,0) -- (2,0) -- (0,2)-- (0,0) ; %plot coordinates {(2,0) (0,2)};
\draw[color=black] (1,0) node[below] {$s_0$};
\draw[color=black] (1,1) node[right] {$s_2$};
\draw[color=black] (0,1) node[left] {$s_1$};
\draw[thick,color=black] (1,0) -- (1,1) -- (0,1)-- (1,0) ; %plot coordinates {(2,0) (0,2)};
\draw[ thick, ->] (3,1)--(6.0,1) ;
\draw[fill=red] (1,1) circle (3pt );
\draw[fill=blue] (1,0) circle (3pt );
\draw[fill=green] (0,1) circle (3pt );

\draw[pattern=dots] (0,0)--(1,0)--(0,1);
\draw[pattern=dots] (1,0)--(2,0)--(1,1);
\draw[pattern=dots] (0,1)--(1,1)--(0,2);

\draw [->] (4.5,1) node[above]{Rescaled to};
\draw[thick,color=black,] (7,0) -- (9,0) -- (7,2)-- (7,0) ; %plot coordinates {(2,0) (0,2)};
		\draw (0,0)  [fill=red] circle (3pt);
		\draw (0,2)  [fill=blue] circle (3pt);
		\draw (2,0)  [fill=green] circle (3pt);
		\draw (7,0)  [fill=red] circle (3pt);
		\draw (7,2)  [fill=blue] circle (3pt);
		\draw (9,0)  [fill=green] circle (3pt);

\end{tikzpicture}
\caption{Rescaling a group of qubits.}\label{fig:rescaling}
\end{figure}
\end{center}

The shaded qubits in Fig.~\ref{fig:rescaling} correspond to a logical operator that emulates 
a bit flip on the rescaled qubit. This operator flips the checks on the corners of the 
rescaled qubit. The collection of these qubits then behaves as a single logical qubit with this 
logical  operator. 

\begin{center}
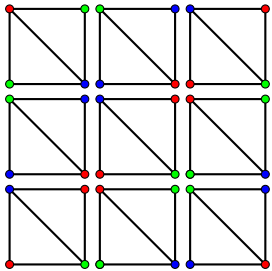
\begin{figure}[htb]

\begin{tikzpicture}[scale=.5] 
\foreach \x in {1,...,3}{
	\foreach \y in {1,...,3}{
		\draw[ thick,color=black,step=1cm] (2*\x+0.4*\x,2*\y+0.4*\y) rectangle (2*\x+2+0.4*\x,2*\y+2+0.4*\y); 
		\draw[ thick,color=black] (2*\x+0.4*\x,2*\y+0.4*\y+2)--(2*\x+0.4*\x+2,2*\y+0.4*\y);
		}
}

\foreach \x in {1,...,3}{
		\draw (2*\x+0.4*\x,2*\x+0.4*\x)  [fill=red] circle (3pt);
		\draw (2*\x+2+0.4*\x,2*\x+2+0.4*\x)  [fill=red] circle (3pt);
		\draw (2*\x+2+0.4*\x,2*\x+0.4*\x)  [fill=green] circle (3pt);
		\draw (2*\x+0.4*\x,2*\x+2+0.4*\x)  [fill=blue] circle (3pt);

}

\foreach \x in {1,...,2}{
		\draw (2*\x+0.4*\x,2*\x+2+0.4*\x+0.4)  [fill=blue] circle (3pt);
		\draw (2*\x+2+0.4*\x,2*\x+4+0.4*\x+0.4)  [fill=blue] circle (3pt);
		\draw (2*\x+2+0.4*\x,2*\x+2+0.4*\x+0.4)  [fill=red] circle (3pt);
		\draw (2*\x+0.4*\x,2*\x+4+0.4*\x+0.4)  [fill=green] circle (3pt);
}

\foreach \x in {2,...,3}{
		\draw (2*\x+0.4*\x,2*\x-2+0.4*\x-0.4)  [fill=green] circle (3pt);
		\draw (2*\x+0.4*\x,2*\x-2+0.4*\x-0.4)  [fill=green] circle (3pt);

		\draw (2*\x+0.4*\x,2*\x+0.4*\x-0.4)  [fill=red] circle (3pt);
		\draw (2*\x+2+0.4*\x,2*\x+0.4*\x-0.4)  [fill=green] circle (3pt);
		\draw (2*\x+2+0.4*\x,2*\x-2+0.4*\x-0.4)  [fill=blue] circle (3pt);
}

\foreach \x in {1,...,1}{
		\draw (2*\x+2+0.4*\x,2*\x+4+0.4*\x+0.8)  [fill=blue] circle (3pt);
		\draw (2*\x+0.4*\x,2*\x+4+0.4*\x+0.8)  [fill=green] circle (3pt);
		\draw (2*\x+2+0.4*\x,2*\x+6+0.4*\x+0.8)  [fill=green] circle (3pt);
		\draw (2*\x+0.4*\x,2*\x+6+0.4*\x+0.8)  [fill=red] circle (3pt);
}

\foreach \x in {3,...,3}{
		\draw (2*\x+0.4*\x,2*\x-4+0.4*\x-0.8)  [fill=blue] circle (3pt);
		\draw (2*\x+2+0.4*\x,2*\x-4+0.4*\x-0.8)  [fill=red] circle (3pt);
		\draw (2*\x+0.4*\x,2*\x-2+0.4*\x-0.8)  [fill=green] circle (3pt);
		\draw (2*\x+2+0.4*\x,2*\x-2+0.4*\x-0.8)  [fill=blue] circle (3pt);
}
\end{tikzpicture}
\caption{Rescaled lattice, after applying the rescaling transformation of Fig.~\ref{fig:rescaling} to the lattice of Fig.~\ref{latticeA}.} 
\end{figure}
\end{center}

There are a restrictions on the scale factor for the rescaling. For instance, the cell 
$$
%\begin{center}
%\begin{figure}[htb]
\begin{tikzpicture}[scale=.5,domain=0:2] 
\draw[thick,color=black,] (0,0) -- (3,0) -- (0,3)-- (0,0) ; %plot coordinates {(2,0) (0,2)};
\draw[thick,color=black,] (1,0) -- (1,1) -- (0,1)-- (1,0) ; %plot coordinates {(2,0) (0,2)};
\draw[thick,color=black,] (1,1) -- (1,2) -- (0,2)-- (1,1) ; %plot coordinates {(2,0) (0,2)};
\draw[thick,color=black,] (2,0) -- (2,1) -- (1,1)-- (2,0) ; %plot coordinates {(2,0) (0,2)};
		\draw (0,0)  [fill=red] circle (3pt);
		\draw (1,0)  [fill=blue] circle (3pt);
		\draw (2,0)  [fill=green] circle (3pt);
		\draw (3,0)  [fill=red] circle (3pt);

		\draw (0,1)  [fill=green] circle (3pt);
		\draw (1,1)  [fill=red] circle (3pt);
		\draw (2,1)  [fill=blue] circle (3pt);

		\draw (0,2)  [fill=blue] circle (3pt);
		\draw (1,2)  [fill=green] circle (3pt);

		\draw (0,3)  [fill=red] circle (3pt);

\end{tikzpicture}
%\caption{A cell which cannot be rescaled}
%\end{figure}
%\end{center}
$$
is not legitimate for rescaling. 

The restriction arises for the following reason. In each cell, there are more error locations than syndrome bits. Therefore, each error can locally be corrected only up to an undetectable error $X$ that remains for the next  level. For the rescaling to work, the error  must  act as an encoded error on the rescaled face. That is, $X$ either flips all corner syndromes or none. This observation leads to    
\begin{lemma}\label{lm:restrictions}
Suppose that we have a hexagonal color code represented on its dual lattice. Then an
equilateral triangular cell in the (dual) lattice of side $n$
containing $n^2$ qubits and  $n(n+1)/2$ checks can be rescaled if and only if 
$n\not\equiv 0 \bmod 3$.
\end{lemma}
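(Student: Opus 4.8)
The plan is to recast ``can be rescaled'' as a linear-algebra condition over $\F_2$ and then show that this condition is governed entirely by the colors of the three corners of the cell. First I fix coordinates: place the cell as the triangle with corners $(0,0),(n,0),(0,n)$, so that its faces (qubits) are the $n^2$ small up- and down-triangles and its checks are the lattice points $(x,y)$ with $x,y\ge 0$, $x+y\le n$. With the proper $3$-coloring of the dual lattice $c(v)=(x-y)\bmod 3$, every face carries exactly one vertex of each color. For a check $v$ let $h_v\in\F_2^{F}$ be its face-incidence row ($f$ contributes $1$ to $h_v$ iff $v$ is a corner of $f$), and let $H_{\mathrm{int}}$ collect the rows of all non-corner checks. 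After local decoding the residual error lies in $W:=\ker H_{\mathrm{int}}$, and the syndrome it leaves on the corners $c_1,c_2,c_3$ is $\bigl((HE)_{c_1},(HE)_{c_2},(HE)_{c_3}\bigr)$. The ``all-or-none'' requirement is then exactly the statement that the three functionals $E\mapsto (HE)_{c_i}$ coincide on $W$, i.e.
\begin{equation}
h_{c_i}+h_{c_j}\in\mathrm{rowspan}(H_{\mathrm{int}})\qquad\text{for every pair }i,j. \label{eq:aon}
\end{equation}

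The engine of the whole argument is the color identity $\sum_{c(v)=R}h_v=\sum_{c(v)=G}h_v=\sum_{c(v)=B}h_v=\mathbf 1_F$, which holds because each face meets one vertex of each color; it survives inside the cell since every face of the cell has all three of its corners in the cell. The corner colors are read off immediately: $(0,0),(n,0),(0,n)$ have $x-y\equiv 0,\,n,\,-n\pmod 3$, so the three corners receive three distinct colors precisely when $n\not\equiv 0\pmod 3$, and all three are red exactly when $n\equiv 0\pmod 3$. For the forward direction assume $n\not\equiv 0$, so the corners are one red, one green, one blue. From the red and green instances of the identity, $h_{c_R}+h_{c_G}=\sum_{v\in R\setminus\{c_R\}}h_v+\sum_{v\in G\setminus\{c_G\}}h_v$, a sum of non-corner rows (the only red corner is $c_R$, the only green corner is $c_G$); the same holds for the other two pairs, so \eqref{eq:aon} is satisfied and every residual flips all corners or none. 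That the ``all three'' case is genuinely realized---so the rescaled qubit is nontrivial---follows by exhibiting the bit-flip operator of Fig.~\ref{fig:rescaling} (the strip of boundary up-triangles), which flips exactly the three distinctly colored corners.

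For the converse suppose $n\equiv 0\pmod 3$, so all three corners are red. Now the red and green instances give $h_{c_1}+h_{c_2}+h_{c_3}=\sum_{v\in R\setminus\{c_1,c_2,c_3\}}h_v+\sum_{v\in G}h_v\in\mathrm{rowspan}(H_{\mathrm{int}})$, so every residual error flips an \emph{even} number of corners; in particular ``all three'' is impossible, and \eqref{eq:aon} could hold only if no residual ever flips a corner. To defeat this it suffices to produce one $E\in W$ flipping exactly two corners: a same-color string along an edge between two red corners, obtained by taking the two unique corner faces at $(0,0)$ and $(n,0)$ and then cancelling the intermediate boundary checks with the row of down-triangles along that edge, leaving the syndrome supported on $\{(0,0),(n,0)\}$ alone (for $n=3$ one checks directly that $E=\{(0,0)(1,0)(0,1),\,(2,0)(3,0)(2,1),\,(1,0)(1,1)(0,1),\,(2,0)(2,1)(1,1)\}$ works). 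This violates all-or-none, so the cell cannot be rescaled. I expect the only real work to sit in this converse---writing the two-corner string cleanly for every $n\equiv 0\pmod3$ and verifying it is annihilated by all interior checks---since the forward direction and the corner-color bookkeeping fall out of the single color identity.
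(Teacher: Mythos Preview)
Your proposal is correct and, at the level of ideas, close to the paper's argument, but you have recast what the paper does geometrically into a clean linear-algebraic statement. The paper argues both directions via ``2-chains'': when $n\not\equiv 0\pmod 3$ the three corners carry three different colors, so a single interior error's three syndromes can be transported to the corners by color-matched 2-chains; when $n\equiv 0\pmod 3$ all corners share one color, a 2-chain reaches two of them, and the paper asserts informally that the third cannot then be flipped without disturbing a non-corner check. Your color identity $\sum_{c(v)=R}h_v=\sum_{c(v)=G}h_v=\sum_{c(v)=B}h_v=\mathbf 1_F$ is precisely the algebraic shadow of those 2-chain moves, and it buys you something the paper does not quite have: your parity consequence $h_{c_1}+h_{c_2}+h_{c_3}\in\mathrm{rowspan}(H_{\mathrm{int}})$ in the $n\equiv 0$ case is a rigorous proof that no residual can flip all three corners, whereas the paper's corresponding sentence is really an appeal to intuition.

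Two small remarks. First, since the paper (and you, in the forward direction) also require the existence of a genuine logical flip, your parity argument already kills rescalability in the $n\equiv 0$ case without the explicit two-corner string; the string is only needed if you insist on Def.~\eqref{eq:aon} alone, and in that case your bottom-edge construction (the $n=3$ instance you wrote out checks) generalizes to arbitrary $n$ as the union of the up-triangles at $(0,0)$ and $(n-1,0)$ with all down-triangles along $y=0$. Second, your corner-color bookkeeping via $c(v)=(x-y)\bmod 3$ is exactly the observation the paper uses implicitly, so the two proofs diverge only in presentation: the paper moves syndromes, you move rows of $H$.
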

\begin{proof}
Suppose that the cell has a  length that is not a multiple of 3. The 
rescaled cell must have the following properties. There must exist an operator which
acts like the bit flip operator on a single qubit, it must therefore be able to flip 
all the corner checks. The crucial observation that we need is that the two checks of the
same color can be connected by error operators as shown in Fig.~\ref{fig:2-chains}.

\begin{center}
\begin{figure}[htb]
\begin{tikzpicture}[scale=.75] 
\clip (0.5,0.5) rectangle (4.5,4.5);
\draw[ color=black,step=1cm] (0,0) grid (6,6); 

\foreach \x in {1,...,6}{
	\foreach \y in {1,...,6}{
		\draw[ color=black] (\x,\y-1)--(\x-1,\y);
	}
}

%\draw [ultra thick, color=red] (2,2) .. controls (2,3) and (4,3) --(3,3)--(4,1);
\draw [ultra thick, color=red]  (2,2) .. controls (2.5,2) and (2.5,3) .. (3,3) .. controls (3.5,2.5) and (3.5,1.5) .. (4,1);

\draw [ultra thick, color=blue]  (1,3) .. controls (1.5,3) and (1.5,4) .. (2,4) .. controls (2.5,4) and (3.5,3.5) .. (4,3);
	\draw[pattern=dots, pattern color=red] (2,2) rectangle(3,3);
	\draw[pattern=dots, pattern color=red] (3,3)--(3,2)--(4,1)--(4,2)--(3,3);

%	\draw[pattern=dots, pattern color=red] (2,2) rectangle(3,3);
%	\draw[pattern=dots, pattern color=red] (3,3)--(3,2)--(4,1)--(4,2)--(3,3);

	\draw[pattern=dots, pattern color=blue] (1,3) rectangle(2,4);
	\draw[pattern=dots, pattern color=blue] (2,4)--(3,3)--(4,3)--(3,4)--(2,4);

%	\draw[pattern=crosshatch, pattern color=blue] (1,3) rectangle(2,4);
%	\draw[pattern=crosshatch, pattern color=blue] (2,4)--(3,3)--(4,3)--(3,4)--(2,4);

\foreach \x in {0,...,6}{
		\draw (\x,\x)  [fill=red] circle (3pt);
}
\foreach \x in {0,...,5}{
		\draw (\x+1,\x)  [fill=blue] circle (3pt);
		\draw (\x,\x+1)  [fill=green] circle (3pt);
}

\foreach \x in {0,...,4}{
		\draw (\x+2,\x)  [fill=green] circle (3pt);
		\draw (\x,\x+2)  [fill=blue] circle (3pt);
}
\foreach \x in {0,...,3}{
		\draw (\x+3,\x)  [fill=red] circle (3pt);
		\draw (\x,\x+3)  [fill=red] circle (3pt);
}
\foreach \x in {0,...,2}{
		\draw (\x+4,\x)  [fill=blue] circle (3pt);
		\draw (\x,\x+4)  [fill=green] circle (3pt);
}
\foreach \x in {0,...,1}{
		\draw (\x+5,\x)  [fill=green] circle (3pt);
		\draw (\x,\x+5)  [fill=blue] circle (3pt);
}
\draw (6,0) [fill=red] circle(3pt);
\draw (0,6) [fill=red] circle(3pt);
\end{tikzpicture}
\caption{2-chains on the color code}\label{fig:2-chains}
\end{figure}
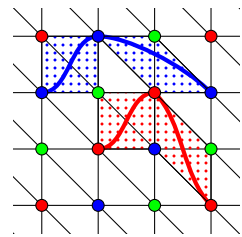
\end{center}

Because of the 3-colorability of the color code, the checks on the
three corners of the cell are of different color. Now assume that there is a single qubit 
error in the interior of the cell, (the checks could be on the boundary but the qubit
itself is in the cell). Then this error causes three checks (each of different color) in its corners to flip.

\begin{center}
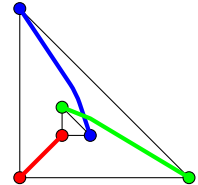
\begin{figure}[htb]
\begin{tikzpicture}[scale=.75] 
\clip (0.5,0.5) rectangle (4.5,4.5);
%\draw[ color=black,step=1cm] (0,0) grid (6,6); 
\draw (1,1)--(4,1)--(1,4)--(1,1);

\draw (1.75,1.75)--(2.25,1.75)--(1.75,2.25)--(1.75,1.75);

\draw (1,1) [fill=red] circle(3pt);
\draw (1,4) [fill=blue] circle(3pt);
\draw (4,1) [fill=green] circle(3pt);

\draw (1.75,1.75) [fill=red] circle(3pt);
\draw (2.25,1.75) [fill=blue] circle(3pt);
\draw (1.75,2.25) [fill=green] circle(3pt);

\draw [ultra thick,  red] (1.75,1.75)--(1,1);
\draw [ultra thick,rounded corners,blue](2.25,1.75)--(2,2.5)--(1,4);
\draw [ultra thick,green] (1.75,2.25)--(2.25,2.05)--(4,1);

\end{tikzpicture}
\caption{Logical operator when $n\not\equiv 0 \bmod 3$}\label{fig:logicalOperator}
\end{figure}
\end{center}

Since all the cells are connected, it is possible to move these checks to the corners of the cell.
The operator that achieves this acts as the logical  operator for the cell.

If on the other hand, the side is a multiple of 3, all the three corners of the triangle 
are of the same color. One 2-chain can be used to flip two corners. However, the third corner
will either lead to a flip of one or more check in the interior of the cell if its support
is restricted to the cell. 
So it is not possible for any error pattern to simultaneously flip 
all the three corners. 
\end{proof}
The above result can be generalized to color codes that are not based on regular lattices. 

\subsection{Splitting the syndrome}\label{ssec:splitHD}

The motivation behind the division of the lattice into smaller blocks is to decode each cell
locally. Since the checks are shared between adjacent cells, this implicitly requires us 
to partition the measured syndrome between the adjacent cells so that we can proceed
with the local decoding of cells. Each (non-corner) syndrome in the boundary of the cells must
therefore be split between the adjacent cells. 
For concreteness let us consider the 
following situation shown in Fig.~\ref{fig:split}. Supposing a shared syndrome were measured to be $s$. 
Then the block on the top and bottom can 
contribute $s^{t}$ and $s^{l}$ such that $s^{t}+s^{l}=s$. 
\begin{center}
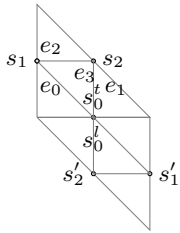
\begin{figure}[htb]
\begin{tikzpicture}[scale=0.75,domain=0:2] 
%\draw[thin,color=gray!10!white,step=.5cm] (0,0) grid (4.4,3.4); 
\draw[thin,color=gray] (0,0) -- (2,0) -- (0,2)-- (0,0) ; %plot coordinates {(2,0) (0,2)};
\draw[color=black] (1,0) node[above] {$s_0^t$};
\draw[color=black] (1,1) node[right] {$s_2$};
\draw[color=black] (0,1) node[left] {$s_1$};
\draw[fill=gray!50!white] (1,1) circle (1pt );
\draw[fill=gray!50!white] (1,0) circle (1pt );
\draw[fill=gray!50!white] (0,1) circle (1pt );
\draw[thin ,color=gray] (1,0) -- (1,1) -- (0,1)-- (1,0) ; %plot coordinates {(2,0) (0,2)};

\draw[color=black] (0.25,0.25) node[above] {$e_0$};
\draw[color=black] (1.4,.25) node[left,above] {$e_1$};
\draw[color=black] (0.25,1.5) node[below] {$e_2$};
\draw[color=black] (.5,.75) node[right] {$e_3$};
\draw[thin,color=gray] (0,0) -- (2,0) -- (2,-2)-- (0,0) ; %plot coordinates {(2,0) (0,2)};
\draw[color=black] (1,0) node[below] {$s_0^l$};
\draw[color=black] (1,-1) node[left] {$s_2'$};
\draw[color=black] (2,-1) node[right] {$s_1'$};
\draw[fill=gray!30!white] (1,-1) circle (1pt );
\draw[fill=gray!30!white] (2,-1) circle (1pt );
\draw[fill=gray!30!white] (0,1) circle (1pt );
\draw[thin,color=gray] (1,-1) -- (2,-1) -- (1,0)-- (1,-1) ; %plot coordinates {(2,0) (0,2)};
%\draw[color=black] (1.75,-0.25) node[below] {$f_0$};
%\draw[color=black] (1.25,-.25) node[right, below] {$f_1$};
%\draw[color=black] (1.75,-1.5) node[above] {$f_2$};
%\draw[color=black] (.25,-.25) node[right] {$f_3$};
\end{tikzpicture}
\caption{Splitting the shared syndromes}\label{fig:split}
\end{figure}
\end{center}

For simplicity consider the following heuristic. Assume that we randomly choose the initial split of 
the check. Let $p(e|s_0^t s_1 s_2)$ denote the probability of errors that are consistent with 
the syndrome. Then the probability that $s^t=1$ is given by 
\ben
p(s^t|s_1s_2) =\frac{p(e|s^t, s_1 s_2)}{p(e|s^t=1, s_1s_2)+p(e|s^t=0, s_1s_2)}.\label{eq:hdupdate}
\een
Similarly, we can compute the probability of the lower cell syndrome $s^l$ as 
\ben
p(s^l|s_1's_2') =\frac{p(e|s^l, s_1' s_2')}{p(e|s^l=1, s_1's_2')+p(e|s^l=0, s_1's_2')}.\label{eq:hdupdate2}
\een
Next the cell on the top and bottom exchange messages so that their estimates for the split probabilities are consistent. 
We then enforce the condition that $s=s^t+s^l$, which leads to an update of the 
$p(s^t)$ and $p(s^l)$ as follows. %We use the same notation for the updated probabilities
Under the assumption that $s=0$, we compute
$p(s^t|s_1s_2s_1's_2')$ as
\ben
 \frac{p(s^t|s_1s_2)p(s^l|s_1's_2')}{p(s^t)p(s^l|s_1's_2')+ (1-p(s^t|s_1s_2))(1-p(s^l|s_1's_2'))}.
 \label{eq:consistent0}
\een
 If $s=1$, then we compute
$p(s^t|s_1s_2s_1's_2')$ as
\ben
 \frac{p(s^t|s_1s_2)(1-p(s^l|s_1's_2'))}{p(s^t)(1-p(s^l|s_1's_2'))+ (1-p(s^t|s_1s_2))p(s^l|s_1's_2')}.
 \label{eq:consistent1}
\een
After the conditional split probabilities have been obtained through Eqs.~(\ref{eq:hdupdate}) - (\ref{eq:consistent1}) we make a hard decision for the syndrome split variable on each cell boundary in the lattice. We either choose the most likely splitting, or make a biased random choice with the respective split probability as prior. Then, we repeat the entire procedure until the pattern of syndrome splits reaches a stable configuration. In practice, we iterate for a predetermined number of rounds.

\subsection{Decoding the basic cell}
The basic cell is decoded using a complete lookup table which is an optimal decoder. Given the split probabilities after the previous step, we make a splitting for all the shared syndromes. Every cell in the code can now be decoded locally with the syndrome
internal to the cell and the syndrome split on the boundary. The effect of this local error
correction is that all syndromes in the interior and the boundaries of the cells are accounted
for. Only the syndromes in the corners of the cells are not accounted for. These checks 
will be used at the next level, to account for any errors in the estimates made by the
decoding in the present stage. Before we can move to the next stage however, we need an error model.
We next show how to derive this error model for the next stage, but first a few remarks on 
the complexity of decoding the basic cells. 

Because the cell is of a constant size
as the length of the code increases there is no increase in the complexity of this decoder 
with the length. The overhead comes in the form of the number of cells which goes polynomially
in the length of the code. For code of length $n=18\cdot x^{2m}$, with a cell size of $x$,
there are $O(n/x^2)$ cells. Therefore the decoding complexity is $O(n)$ at each level. 
The overall complexity including the rescaling at each stage we obtain the complexity as 
$O(n\log_2n)$.

\subsection{Rescaling the lattice }
For the iterative procedure to work, we need the error model at each level of rescaling. 
This is obtained as follows. Based on the split probabilities we obtain a splitting of 
all the shared syndromes, which we shall henceforth call the {\em reference splitting}. This
can be made randomly or simply choosing the most likely split for each shared syndrome. 
The latter is deterministic---the same set of split probabilities will return the same
reference splitting. 

Assume that we have a certain splitting for the block being 
decoded. Then the block decoder returns an estimate for this syndrome and also a probability
of error in the estimate.
Now suppose that the syndrome is $s_0s_1s_2$, then the rescaling algorithm makes the correction that is given by canonical estimate corresponding to $s_0s_1s_2$ and returns
the probability of error in the rescaled qubit as 
\ben
p_{\text{rescaled},s_0s_1s_2}^{\text{hard}} &= &\frac{p(e+\overline{X}|s_0s_1s_2)}{p(e|s_0s_1s_2)+p(e+\overline{X}|s_0s_1s_2)},\label{eq:rescaledP}
\een
where $e$ is the canonical estimate  and $\bar{e}=e+\overline{X}$, the complementary error. We pass on this error probability as prior to the next level.
 
 \section{Refinements}

 \subsection{Soft splitting of syndrome}

 The hard-decision heuristic for finding the configuration of syndrome splits described in  Section~\ref{ssec:splitHD} discards much available information. Namely, in each round of the iteration, split probabilities are being computed but immediately rounded to 0 or 1. A better approach is the following soft decision version, in which a recursion relation is set up for the split probabilities. The split probabilities are computed as 
 \ben
 p(s^t) = \sum_{s_1s_2}\frac{p(e|s^t, s_1 s_2) p(s_1s_2)}{p(e|s^t=1, s_1 s_2)+p(e|s^t=0, s_1s_2)}.
 \label{eq:splitRule0}\\
 p(s^l) = \sum_{s_1's_2'}\frac{p(e|s^l, s_1' s_2') p(s_1's_2')}{p(e|s^l=1, s_1's_2')+p(e|s^l=0, s_1's_2')}.\label{eq:splitRule1}
 \een
 Once again, consistency with respect to the measured syndrome is enforced as in equations~\eqref{eq:consistent0}~and~\eqref{eq:consistent1}. For instance, if the measured syndrome is $s=0$ we obtain the following
 update for the top split probability, 
 \ben
p(s^t) & \mapsto & \frac{p(s^t)(1-p(s^l))}{p(s^t)(1-p(s^l))+ (1-p(s^t))p(s^l)}.\label{eq:splitRuleConsistent}
\een
For the initial split of  $p(s^t)$ and $p(s^l)$,  we compute 
$p_l $ as probability of odd number of errors in $e_0e_1e_3$ and $p_r$ as the probability of an
odd number of errors in the lower cell. Then we compute $p(s^t)$ and $p(s^l)$ from $p_l$ and 
$p_r$ conditioned on the measured syndrome as in equations~\eqref{eq:consistent0}~and~\eqref{eq:consistent1}. 

\subsection{Soft rescaling of the lattice}

As in the syndrome splitting case, the hard decision approach is not optimal. 
Since we do not know for sure which splitting is the correct one, it is preferable to use all the information available to perform the rescaling.
This raises the problem of consistently combining the error probabilities from each splitting. 

The key idea in making this combination is the fact that a stabilizer generator has the effect of
flipping the splitting on that check. This provides  us with a means to relate canonically the errors
across different splittings. Suppose that we have an estimate $e$ for a splitting $s$. If we changed
the splitting to some other splitting $\tilde{s}$, then the canonical estimate for this splitting
is the same as the one obtained by applying half stabilizers that flip $s$ to $\tilde{s}$.

Let us elaborate on this in more detail.
The choice for the estimate  is follows. Let $S_i$ be the check operator on the $i$th
check node (of the $2\times 2$ triangular cell shown in Fig.~\ref{fig:softRescaling}). Then the support of $S_i$ in the cell is given as
$S_i|_{\text{cell}}$. For instance, the 0th check has support on qubits   $\{0,1,3\}$, they are shaded in the
figure below.

\begin{center}
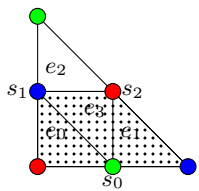
\begin{figure}[htb]
\begin{tikzpicture}[scale=1,domain=0:2] 
%\draw[thin,color=gray!10!white,step=.5cm] (0,0) grid (4.4,3.4); 
\draw[color=black] (0,0) -- (2,0) -- (0,2)-- (0,0) ; %plot coordinates {(2,0) (0,2)};
\draw[color=black] (1,0) node[below] {$s_0$};
\draw[color=black] (1,1) node[right] {$s_2$};
\draw[color=black] (0,1) node[left] {$s_1$};
\draw[color=black] (1,0) -- (1,1) -- (0,1)-- (1,0) ; %plot coordinates {(2,0) (0,2)};
\draw[pattern = dots] (0,0)--(2,0)--(1,1)--(0,1)--(0,0);

\draw[fill=red] (0,0) circle (3pt );
\draw[fill=blue] (2,0) circle (3pt );
\draw[fill=green] (0,2) circle (3pt );
\draw[fill=red] (1,1) circle (3pt );
\draw[fill=blue] (0,1) circle (3pt );
\draw[fill=green] (1,0) circle (3pt );
\draw[color=black] (0.25,0.25) node[above] {$e_0$};
\draw[color=black] (1.25,.25) node[left,above] {$e_1$};
\draw[color=black] (0.25,1.5) node[below] {$e_2$};
\draw[color=black] (.5,.75) node[right] {$e_3$};

\end{tikzpicture}
\caption{Rescaling }\label{fig:softRescaling}
\end{figure}
\end{center}

The crucial observation is that by applying $S_0$, we can flip the syndrome $s_0$. So
supposing we have a splitting $s_0s_1s_2$ with estimate $e_0 e_1 e_2 e_3$, to obtain the 
estimate for say 
$\bar{s}_0s_1s_2$, where $\bar{s}_0= s_0\oplus 1$ we can take the estimate of 
$s_0s_1s_2$ and then flip the qubits on the support of $S_0$ in the cell, thus we get 
$\bar{e}_0 \bar{e}_1 e_2 \bar{e}_3$ as the estimate for $\bar{s}_0s_1s_2$.
So letting $0000$ be the canonical estimate for the all zero syndrome we obtain the estimate
for $s_0s_1s_2=100$ as $e_0 e_1 e_2 e_3=1101$ and similarly for the rest. 

The rationale for this choice can be understood as follows. We can view any splitting that
differs from the reference splitting as having the same effect as the reference splitting 
if it can be obtained from the estimate from the reference 
splitting by applying a stabilizer operator. 

Now we can give a soft decision based rule for updating the rescaled error probabilities. 
Assume that a given cell has the reference splitting 
 $\tilde{s}_0 \tilde{s}_1\tilde{s}_2$. If this splitting is correct, then
we have the rescaled qubit's error probability as given in \eqref{eq:rescaledP}. But supposing that the correct splitting was $s_0s_1s_2$, then the
error in the rescaled qubit will have to be conditioned on i) $s_0s_1s_2$,
and ii) the estimate for the cell and the correction were applied under the assumption that 
$\tilde{s}_0 \tilde{s}_1\tilde{s}_2$ was the correct splitting for the cell.
%The weight of $s_0s_1s_2\oplus \tilde{s}_0 \tilde{s}_1\tilde{s}_2$ gives the number of errors in the splitting applied with respect to the correct splitting $s_0s_1s_2$.

Consider the stabilizer operator $S_d$ 
\ben 
S_d = \prod_{\stackrel{i=0} {i: s_i\oplus \tilde{s}_i=1}}^2 S_i
\een
Denote by $e_d=( S_d|_{\text{cell}}) $ the restriction of $S_d$ to the basic cell under consideration. 
The support of $e_d$   will determine
if the erroneous splitting that was applied requires a correction on the rescaled qubit. 
Let $\tilde{e}$ be the canonical error estimate associated with the splitting $s_0s_1s_2\oplus \tilde{s}_0 \tilde{s}_1\tilde{s}_2$. Then the error 
associated to $s_0s_1s_2$ which does not require a correction at the next level is given by
$\tilde{e} + e_d$.
Then the probability of error in the rescaled qubit is
given by 
\ben
p_{\text{rescaled},s_0s_1s_2}^{\text{hard}}  = \frac{p(e_d+\tilde{e}+\overline{X})}{p(e_d+\tilde{e}+\overline{X})+p(e_d+\tilde{e})}
\een
\ben
p_{\text{rescaled}}^{\text{soft}} &=&\sum_{s_0s_1s_2}p_{\text{rescaled},s_0s_1s_2}^{\text{hard}} p(s_0s_1s_2)\\
& =& \sum_{s_0s_1s_2}\frac{p(e_d+\tilde{e}+\overline{X}) p(s_0s_1s_2)}{p(e_d+\tilde{e}+\overline{X})+p(e_d+\tilde{e})}\label{eq:rescale}
\een

Let us explicitly write down this rule for the case of $2\times 2 $ cell.
\begin{equation}
\begin{array}{rcl}
p &=& \frac{p(1110)}{p(1110)+p(0000)}p(000)+ 
\frac{p(1001)}{p(1001)+p(0111)}p(001)+\\
&+&\frac{p(0101)}{p(0101)+p(1011)}p(010)+
\frac{p(0010)}{p(0010)+p(1100)}p(011)\\
&+& \frac{p(0011)}{p(0011)+p(1101)}p(100)+
\frac{p(0100)}{p(0100)+p(1010)}p(101)+\\
&+&\frac{p(1000)}{p(1110)+p(0110)}p(110)+
\frac{p(1111)}{p(1111)+p(0001)}p(111)
\end{array}
\end{equation}
Note that the rescaled error probability is independent of the reference splitting.

\subsection{Looking ahead at the corners}
For small cells, the algorithm has to slightly modified to account for certain malicious error patterns that would otherwise persist with appreciable probability at higher levels of the recursion. An example is
$$
%\begin{center}
%\begin{figure}[htb]
\begin{tikzpicture}[scale=.75] 
\draw[ thick,color=black,step=1cm] (0,0) grid (2,2); 
\foreach \x in {1,...,2}{
	\foreach \y in {1,...,2}{
		\draw[ thick,color=black] (\x,\y-1)--(\x-1,\y);
	}
}

\foreach \x in {0,...,0}{
	\draw[pattern=dots] (\x*3,0)--(3*\x+1,0)--(3*\x+1,1)--(3*\x,1)--(3*\x,0);
}

\foreach \x in {0,...,2}{
		\draw (\x,\x)  [fill=red] circle (3pt);
}
\foreach \x in {0,...,1}{
		\draw (\x+1,\x)  [fill=blue] circle (3pt);
		\draw (\x,\x+1)  [fill=green] circle (3pt);
}

		\draw (0,2)  [fill=blue] circle (3pt);
		\draw (2,0)  [fill=green] circle (3pt);

\end{tikzpicture}
%\caption{A malicious  error pattern}\label{fig:errPatterns}
%\end{figure}
%\end{center}
$$
Such patterns can be accounted for by modifying the qubit probabilities in the prior, by taking the corner syndromes into account ``ahead of time''. This will direct the decoding algorithm to correct these errors. %The heuristic that we used was as follows. 

Consider a corner syndrome. Each qubit in the
check belongs to a different cell. 

%\begin{figure}[htb]
$$
\begin{tikzpicture}%[scale=0.75]
\draw[ thin, gray ] (0,0)--(-1,0)--(-1,1)--(0,0);
\draw[ thin, gray ] (0,0)--(1,0)--(0,1)--(0,0);
\draw[ thin, gray ] (0,0)--(-1,1)--(0,1)--(0,0);
\draw[ thin, gray ] (0,0)--(-1,0)--(0,-1)--(0,0);
\draw[ thin, gray ] (0,0)--(0,-1)--(1,-1)--(0,0);
\draw[ thin, gray ] (0,0)--(1,-1)--(1,0)--(0,0);
\draw[ thin ]  (0,0)--(0,-2)--(2,-2)--(0,0);
\draw (0,0) [fill=gray] circle (3pt);

\draw (-0.5,0) node[left, below] {$p_1$};
\draw (-0.5,0) node[left, above] {$p_2$};
\draw (0,0.5) node[left] {$p_3$};
\draw (0,0.5) node[right] {$p_4$};
\draw (0.5,0) node[below] {$p_5$};
\draw (0,-0.5) node[right] {$p_6$};

\end{tikzpicture}
$$
%\end{figure}

We update the probabilities of these qubits as follows. 
Denote by $p_i=p_6$, the probability of error in the qubit interior to the cell.
Let $s$ be the corner syndrome. Then the updated
error probability of the qubit is given as 
\begin{eqnarray}
p_6 \mapsto \frac{p_ep_i}{p_ep_i+(1-p_e)(1-p_i)},\label{eq:cornerUpdate2}
\end{eqnarray}
where $p_e$ is the contribution of the qubits external to the cell
\begin{eqnarray}
p_e = \frac{1}{2} -(-1)^s\frac{1}{2} \prod_{j=1}^5 (1-2p_j).\label{eq:cornerUpdate1}
\end{eqnarray} 

\subsection{Belief propagation for prior update}
The performance of the algorithm can be improved by providing a more accurate estimate of the 
qubit error probabilities to the decoder. This can be achieved, as in the case of \cite{poulin10}
by using a standard belief propagation decoder to update the prior estimates for the qubit 
error probabilities. Unlike the toric codes, the color codes have a smaller girth and a larger number
of 4-cycles in their Tanner graph, therefore  it is not recommended to use too many iterations in the belief propagation algorithm to update the priors. 
%This leads to an improvement in the performance of the decoder. 

The complete algorithm for the decoder is now given in Algorithm~\ref{alg:tccDecoder}.

\begin{algorithm}[H]
\caption{{\ensuremath{\mbox{\scshape Decoding topological color code }}}}\label{alg:tccDecoder}
\begin{algorithmic}[1]
\REQUIRE {Syndrome, channel error rate.}
\STATE Update the initial error probability on the qubits.
\STATE For each syndrome in the boundary compute initial split probabilities based on the qubits in the cell as per the following rule
\begin{eqnarray*}
p=\frac{1}{2}-\frac{1}{2}\prod_i (1-2p_i).
\end{eqnarray*}
\STATE Send $p$ or $1-p$ to the neighboring cell depending on whether the shared syndrome is 0 or 1.
\STATE Enforce consistency on the split probabilities as per the following rule.
\begin{eqnarray*}
\frac{p_lp_r}{p_lp_r+(1-p_l)(1-p_r)}
\end{eqnarray*}
where $p_l $ is the split probability as estimate by the cell and the $p_r$ is the message
from the adjacent cell. 

\STATE Update the corner qubit probabilities based on the corner syndrome according
to \eqref{eq:cornerUpdate1}. Let $p_i$ be the probability that there is an error in the corner
qubit and $p_e$ be the probability that there is an error in the qubits external to the cell.
Then update the corner qubit probability by enforcing consistency on $p_i$ and $p_e$ as per
\eqref{eq:cornerUpdate2}.

\STATE Send messages to the neighbouring cells and estimate new split probabilities according
to equations~\eqref{eq:splitRule0},~\eqref{eq:splitRule1} and \eqref{eq:splitRuleConsistent}. 
\STATE Split the syndrome according to the split probabilities.
\STATE Locally estimate the error and correct. 
\STATE Compute the error probabilities for the next level of recursion as in equation~\eqref{eq:rescale}.
\STATE Repeat until the level 0.
\STATE Use a lookup table to estimate the final error.
\STATE Propagate the error to lower levels to obtain the final estimate. 
\end{algorithmic}
\end{algorithm}

\section{Conclusions}
\begin{figure}%[htb]
\includegraphics[width=7.5cm]{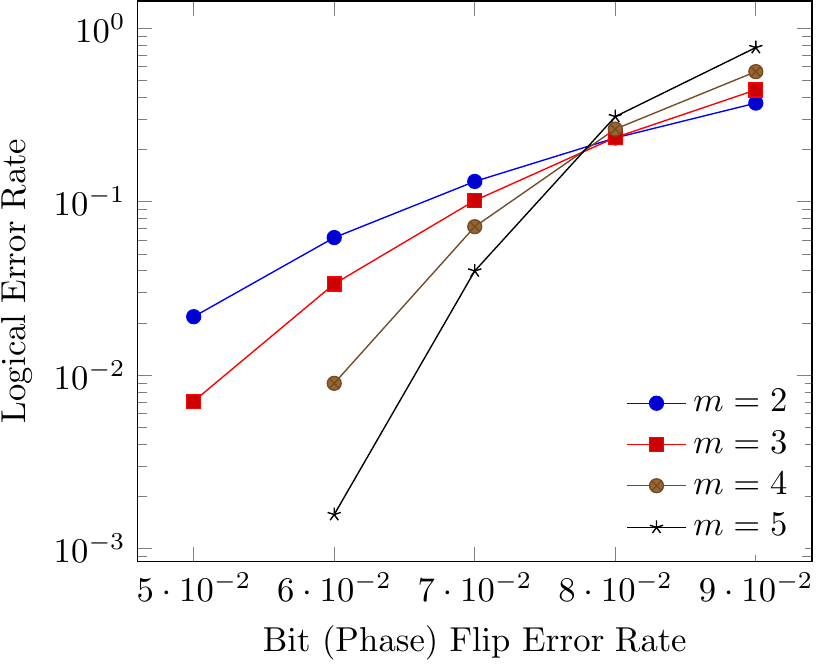}

\caption{Simulation results. The logical error rate after error-correction vs the physical error probability of independent  spin (phase) flip errors. Shown is the performance of $[[18\cdot 4^m, 4, 2^{m+2}]]$ codes 
for $m=2$, $3$, $4$ and $5$.}\label{fig:thrRes}
\end{figure}

We have presented a recursive decoding algorithm for the color code on the hexagonal lattice. It yields a  threshold value of 7.8\% for an error model of independent spin and phase flip errors, under the assumption that syndrome extraction is perfect. The results of our numerical simulation are shown in Fig.~\ref{fig:thrRes}. The recursion is based on a square cell of 8 qubits.

\noindent
\textbf{Acknowledgments}
This research is supported by grants from NSERC, MITACS and CIFAR.
{P.S. would like to thank David Poulin and Guillaume Duclos-Cianci for discussions.}


\begin{thebibliography}{99}

\bibitem{FT1}
E. Knill, R. Laflamme, and W. H. Zurek, Proc. R. Soc. A \textbf{454}, 365 (1998).
\bibitem{FT2}
D. Aharonov and M. Ben-Or, Proceedings 29th Annual Symposium on Theory of Computing (ACM, New York, 1997),p. 176; quant-ph/9906129.

\bibitem{FT3}
P. Aliferis, D. Gottesman, and J. Preskill, Quantum Inf. Comput. \textbf{6}, 97 (2006).

\bibitem{Kit1}
A. Kitaev, Ann. Phys. (N.Y.) \textbf{303}, 2 (2003).

\bibitem{dennis02}
E. Dennis, A. Kitaev, A. Landahl, and J. Preskill, J. Math. Phys \textbf{43},  4452 (2002).


\bibitem{Edmonds}
J. Edmonds, Can. J. Math. \textbf{17}, 449 (1965).

\bibitem{RH}
  R. Raussendorf and J. Harrington, Phys. Rev. Lett. \textbf{98}, 190504 (2007).

\bibitem{BombinX}
  H. Bombin and M.A. Martin-Delgado, J. Phys. A: Math. Theor. \textbf{42} 095302 (2009).

\bibitem{BombinTwist}
H. Bombin, Phys. Rev. Lett. \textbf{105}, 030403 (2010).
 

\bibitem{FowlerDef}
A.G. Fowler, Phys. Rev. A \textbf{83}, 042310 (2011). 

\bibitem{BK2}
S. Bravyi and A. Kitaev, Phys. Rev. A \textbf{71}, 022316 (2005).

\bibitem{bombin06}
H. Bombin and M.A. Martin-Delgado, Phys. Rev. Lett. \textbf{97}, 180501 (2006).

\bibitem{bombin07}
H. Bombin and M.A. Martin-Delgado, Phys. Rev. Lett. \textbf{98}, 160502 (2007).

\bibitem{KG1}
H.G. Katzgraber, H. Bombin, and M.A. Martin-Delgado, Phys. Rev. Lett. \textbf{103}, 090501 (2009). 

\bibitem{KG2}
R.S. Andrist, H.G. Katzgraber, H. Bombin, M.A. Martin-Delgado, New J. Phys. \textbf{13}, 083006 (2011).

\bibitem{Land}
A.J. Landahl, J.T. Anderson, P.R. Rice, eprint: arXiv:1108.5738.  

\bibitem{wang10}
D. S. Wang, A. G. Fowler, C. D. Hill, L. C. L. Hollenberg, Quant. Inf. Comp. \textbf{10}, 780 (2010).

\bibitem{bombin11}
H. Bombin, G. Duclos-Cianci, G. and  D. Poulin, eprint: arXiv:1103.4606.

\bibitem{poulin10}
G. Duclos-Cianci and D. David Poulin, Phys. Rev. Lett. \textbf{104}, 050504 (2010). 

\bibitem{anderson11}
J.T. Anderson, eprint: arXiv:1107.3502.

 
\end{thebibliography}
\end{document}